\documentclass[11pt,letterpaper]{article}

\usepackage[margin=1in]{geometry}
\usepackage{graphicx}
\usepackage{booktabs}
\usepackage{amsthm}
\usepackage{amsmath,amsfonts,amssymb}
\usepackage{thmtools}
\usepackage{thm-restate}
\usepackage{doi}
\usepackage{hyperref}
\usepackage[capitalise]{cleveref}
\usepackage{comment}
\usepackage{multirow}
\newcommand{\citep}{\cite}
\newcommand{\citet}{\cite}

\newcommand{\RR}{\mathbb{R}}
\newcommand{\Rd}{\RR^d}
\newcommand{\norm}[1]{\left\lVert#1\right\rVert}

\newcommand{\Xspace}{\mathcal{X}}
\newcommand{\median}{\mu^P}

\providecommand{\eqdef}{:=}

\DeclareMathOperator{\dist}{dist}
\DeclareMathOperator*{\argmin}{argmin}
\DeclareMathOperator{\cost}{cost}
\DeclareMathOperator{\rcda}{RCDA}

\usepackage{xcolor}

\newtheorem{theorem}{Theorem}
\numberwithin{theorem}{section}
\newtheorem{lemma}[theorem]{Lemma}
\newtheorem{corollary}[theorem]{Corollary}
\newtheorem{definition}[theorem]{Definition}
\newtheorem{proposition}[theorem]{Proposition}
\newtheorem{claim}[theorem]{Claim}

\theoremstyle{definition}
\newtheorem{remark}[theorem]{Remark}

\title{
Optimal Stable Coresets for Geometric Median via Uniform Sampling
}

\author{Amir Carmel%
    \thanks{Email: \texttt{amir6423@gmail.com}
    }
    \qquad
    Robert Krauthgamer%
        \thanks{The Harry Weinrebe Professorial Chair of Computer Science.
            Work partially supported by the Israel Science Foundation grant \#1336/23. 
            Email: \texttt{robert.krauthgamer@weizmann.ac.il}
        }
    \qquad
    Nir Petruschka%
    \thanks{Email: \texttt{nir.petruschka@weizmann.ac.il}
    }
    \\  Weizmann Institute of Science
}
\date{}
\begin{document}

\maketitle

\begin{abstract}
The geometric median problem asks to find a point in $\Rd$ that minimizes the sum of Euclidean distances to an input set.
It is a classical problem in computational geometry and appears as a subroutine in numerous optimization tasks, 
many of which require the solution to satisfy additional structural constraints.
A common approach to reduce the input size is to construct a coreset,
which is a small weighted subset that faithfully represents the input for a specific optimization problem.
Strong coresets preserve the cost of every candidate solution but require linear time to construct; 
weak coresets admit sublinear construction, in fact by uniform sampling, 
but only preserve near-optimal solutions, which is insufficient when the solution is constrained. 
To address this, we focus instead on the recently introduced intermediate notion of a \emph{stable coreset}, which simultaneously handles all constrained variants.
Currently, there is a large gap between the known sample sizes for stable and weak coresets.

Our main result is that a uniform sample of size $O(\epsilon^{-2} \log \tfrac{1}{\epsilon})$ is a stable $(\epsilon, O(\epsilon))$-coreset for the geometric median, with high constant probability, 
and this bound is tight up to the logarithmic factor.
Our analysis adapts recent machinery of Carmel and Krauthgamer (ICLR 2026) 
for constructing stable coresets, which incurs an $O(\log d)$ factor.
We show an iterative argument that progressively reduces the sample size, 
and eliminates this dependence on the dimension $d$. 
At a high level, this approach resembles the technique of iterative size reduction, 
which is applicable for strong coresets but not for weak coresets.
\end{abstract}

\section{Introduction}

Clustering is a basic tool for understanding large collections of data, with applications throughout machine learning, statistics, and optimization. 
The core idea is to partition the data points into sets (called clusters) that share common characteristics,
and a classical formulation is center-based clustering, 
where each cluster is represented by a single point. 
One of the most fundamental problems in this family is the \emph{geometric median}, 
which asks for a point that minimizes the sum of distances to all the input points. 
Despite its apparent simplicity, the geometric median exhibits strong robustness properties 
and has broad applications, from robust statistics to facility location and machine learning. 
To define this problem formally, we view it as a special case of the \emph{$k$-median problem}, 
where the input is a point set $P$ in a metric space $(\Xspace, \dist)$, 
and the goal is to find a set $C \subseteq \Xspace$ of $k$ centers 
that minimize the objective
$$\cost(C, P) \eqdef \sum_{p \in P} \min_{c \in C} \dist(p, c).$$
The \emph{geometric median problem} (also called the \emph{Fermat-Weber problem})
is the case $k=1$ in the widely studied setting of Euclidean spaces,
i.e., $\Xspace = \Rd$ and $\dist(x,y) = \norm{x-y}_2$. 
Our main result addresses the geometric median problem, however the broader context is often useful, 
e.g., our proof connects this problem to $1$-median in $\ell_1$ spaces. 

As datasets grow in size, data reduction techniques have become essential, as they enable algorithms to operate on small summaries rather than on the entire dataset. 
A coreset formalizes this idea by providing a small instance of the same problem, 
where solving the problem on the coreset immediately provides similar guarantees for the original dataset.
Coresets usually offer a tunable tradeoff between size and quality, the latter measured as an approximation factor. 
A \emph{strong $\epsilon$-coreset} for $k$-median is a weighted subset $Q \subseteq P$ 
such that for every set of $k$ centers $C \subset \Rd$, 
the cost is approximately preserved, namely, 
\begin{equation}
  \cost(C, Q) \in (1 \pm \epsilon) \cdot \cost(C, P). 
\end{equation}
Strong coresets for the classical $k$-median problem in Euclidean space, alongside numerous variants, have been studied extensively over the past two decades, 
see \cref{sec:related} for more details.

Many modern applications involve massive datasets,
for which even linear-time processing might be prohibitive.
Yet, constructing a strong coreset inherently requires reading each data point,
which takes at least linear time $\Omega(nd)$, regardless of the coreset's size.%
\footnote{Consider an instance $P$ with a single extremely distant point. 
A strong coreset must include this point, which requires scanning the entire instance.}
Therefore, to open the door for sublinear-time algorithms,
one natural relaxation is the notion of a \emph{weak coreset}.
A weighted subset $Q \subseteq P$ is called a weak $(\epsilon, \eta)$-coreset 
if every near-optimal solution for $Q$ is also a near-optimal solution for $P$,
namely, for every set of $k$ centers $C \subset \Rd$, 
\begin{equation}
  \text{if } \cost(C, Q) \leq (1+\epsilon) \cdot \min_{|C'| = k} \cost(C', Q) ,
  \text{ then }
  \cost(C, P) \leq (1+\eta) \cdot \min_{|C'| = k} \cost(C', P). 
\end{equation}
It is worth noting that most constructions achieve $\eta=O(\epsilon)$, 
but covering the entire literature requires two separate parameters.
Weak coresets can be constructed via uniform sampling, a method widely adopted in practice, often without theoretical guarantees, and naturally suited to streaming and distributed settings.
See \cref{tab:comparison} for known bounds and references.

Despite its appeal, the weak coreset notion has a fundamental limitation: 
A strong coreset has guarantees for \emph{every} candidate solution,
but a weak coreset provides guarantees only for candidates that are near-optimal (on the coreset).
This might be a severe limitation when the solutions of interest are constrained to a subset $\Xspace'\subset \Xspace$
(because an optimal solution in $\Xspace'$ need not be optimal also in $\Xspace$),
and such constraints indeed arise from structural requirements or downstream tasks, as we discuss further below. 

The notion of a \emph{stable coreset}, recently introduced by~\citet{carmel2025stablecoresetsunleashingpower}, 
is intermediate between the strong and weak coreset notions and addresses this issue,
by preserving the relative quality of \emph{all pairs} of candidate solutions.
Formally, a weighted subset $Q \subseteq P$ is called a stable $(\epsilon, \eta)$-coreset 
if for every pair of candidates $C, C' \subset \Rd$, 
\begin{equation}
  \text{if } \cost(C, Q) \leq (1+\epsilon) \cdot \cost(C', Q) , 
  \text{ then }
  \cost(C, P) \leq (1+\eta) \cdot \cost(C', P).
\end{equation}
Crucially, since this guarantee holds simultaneously for all pairs of candidates, 
the same stable coreset can be reused across multiple constraints and downstream tasks, 
and its construction does not require a priori knowledge of the constraints.

The geometric median problem is used as a subroutine in numerous optimization tasks, 
and many of them require the solution to satisfy additional constraints,
which essentially restrict it to a predetermined subset $\Xspace'\subset \Xspace = \Rd$.
For instance, the solution may be required to lie in a low-dimensional linear subspace, 
or more generally on an algebraic variety, 
or to have a sparse coordinate structure (e.g., $\|c\|_0\leq \alpha$).

We illustrate this with three concrete examples.
First,  in kernel-based methods, the input points are all images of a feature map $\varphi$  
and also the solution must be an image of $\varphi$~\citep{DBLP:conf/icml/MinskerSLD14, DBLP:journals/pami/NienkotterJ23}; 
in this case, $\Xspace'$ is simply the image set of $\varphi$.
Second,  the \emph{discrete} geometric median problem requires the solution to be one of the input points in $P$~\citep{DBLP:conf/aistats/NewlingF17, DBLP:journals/algorithmica/Har-PeledJR21, DBLP:conf/cccg/DaescuT22}; in this case $\Xspace'=P$ is discrete.
Third, the \emph{ultrametric $1$-median} problem, in which the input distances form an ultrametric and the solution is a median terminal node, admits a $(1+\epsilon)$-approximation in $\widetilde{O}(\epsilon^{-3})$ time~\citep{CHANG202065}. Since finite ultrametrics embed isometrically into $\ell_2$~\citep{SHKARIN200413}, this is a special case of constrained $\ell_2$ $1$-median.

In all these cases where we seek an optimal or near-optimal solution in $\Xspace'$, a weak coreset provides no guarantees,
while a stable coreset is applicable, in fact simultaneously for all possible restrictions $\Xspace'$.

\renewcommand{\arraystretch}{1.2}
\begin{table}[t]
\centering
\caption{Sublinear-time constructions of coresets for the geometric median problem. 
In all of them, the sample complexity equals the (listed) coreset size;
in fact, the coreset is simply the sample, except when marked by $\dagger$, in which the sample is post-processed.
Strong coresets are not listed as they require at least linear time. 
}
\begin{tabular}{|l|cl|}
\hline
Coreset type & Coreset size & Reference \\
\hline
\multirow{2}{*}{weak $(0,\epsilon)$} 
  & $\widetilde{O}(\epsilon^{-2} d)$ & \cite{DBLP:journals/talg/AckermannBS10,DBLP:journals/ki/MunteanuS18} \\
\cline{2-3}
  & $\widetilde{O}(\epsilon^{-4})$ & \cite{danos21} \\
\hline
\multirow{3}{*}{weak $(\epsilon,O(\epsilon))$} 
  & $\widetilde{O}(\epsilon^{-4})\ ^\dagger$ & \cite{DBLP:conf/nips/Cohen-AddadSS21} \\
\cline{2-3}
  & $\widetilde{O}(\epsilon^{-3})$ & \cite{DBLP:conf/icml/HuangJL23} \\
\cline{2-3}
  & $\widetilde{O}(\epsilon^{-2})$ & \cite{DBLP:conf/icml/Woodruff024a} \\
\hline
\multirow{2}{*}{stable $(\epsilon,O(\epsilon))$} 
  & $\widetilde{O}(\epsilon^{-2}\log d)$ & \cite{carmel2025stablecoresetsunleashingpower} \\
\cline{2-3}
  & $\widetilde{O}(\epsilon^{-2})$ & \cref{thm:main}\\
\hline
\end{tabular}
\label{tab:comparison}
\end{table}

\subsection{Our Results}
\label{sec:results}

A long line of work has progressively refined the size of weak coresets (see~\cref{tab:comparison}), 
however these results and techniques do not extend (at least not directly) to stable coresets,
for which the known bound falls behind and even depends on the dimension $d$.
Our main result is a stable coreset for geometric median 
that is constructed via uniform sampling (with no post-processing)
and has near-optimal size and query complexity. 
In our result, a uniform sample is one drawn without replacement.

\begin{restatable}{theorem}{mainthm}\label{thm:main}
Let $\epsilon \in (0,1/4)$ and let $P\subset\Rd$ be finite. 
Then a uniform sample of size $O(\epsilon^{-2} \log \frac{1}{\epsilon})$ from $P$ 
is a stable $(\epsilon, 17\epsilon)$-coreset for $1$-median in $\ell_2^d$ 
with probability at least $4/5$.
\end{restatable}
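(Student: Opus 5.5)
We would reduce the stable-coreset guarantee to an additive-error statement about cost differences. Fix a reference point, namely an approximate median $c^\ast$ of $P$ with $\cost(c^\ast,P)\le 2\,\mathrm{OPT}$. For a uniform sample $S$ of size $m$, weighted by $n/m$, it suffices to show that with probability at least $4/5$, for every pair $c,c'\in\Rd$,
\begin{equation*}
\bigl|(\cost(c,S)-\cost(c',S)) - (\cost(c,P)-\cost(c',P))\bigr| \le O(\epsilon)\,\bigl(\cost(c,P)+\cost(c',P)\bigr),
\end{equation*}
and also that $\cost(c,S)\approx\cost(c,P)$ up to a factor $1\pm O(\epsilon)$ for centers far from $c^\ast$, where the cost is dominated by $n\|c-c^\ast\|$. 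The conclusion $\cost(c,P)\le(1+17\epsilon)\cost(c',P)$ from $\cost(c,S)\le(1+\epsilon)\cost(c',S)$ then follows by rearranging. Since $\|p-c\|-\|p-c'\|$ is $\|c-c'\|$-Lipschitz in $p$ and bounded in absolute value by $\|c-c'\|$, the difference is a sum of bounded terms. This is why uniform sampling can work even though single costs cannot be preserved: an outlier $p$ contributes at most $\|c-c'\|\le \cost(c,P)/n+\cost(c',P)/n$ after using the triangle inequality through the points.

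The core step is a uniform concentration bound for the class $\{p\mapsto \|p-c\|-\|p-c'\|\}$. Following the machinery of \cite{carmel2025stablecoresetsunleashingpower}, we would first treat centers in a ring $\|c-c^\ast\|\approx 2^i\,\mathrm{OPT}/n$. We would embed $\ell_2^d$ isometrically, up to $1\pm\epsilon$ distortion, into $\ell_1$ by random projections. That lets us write $\|p-c\|$ as an average of one-dimensional terms $|\langle g,p\rangle-\langle g,c\rangle|$. The $1$-median in $\ell_1$ then decomposes coordinatewise into one-dimensional medians. In one dimension, stability of a uniform sample reduces to the sample's empirical quantiles being accurate, i.e.\ a DKW-type bound with $O(\epsilon^{-2})$ samples. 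Naively, union bounding or chaining over the projection directions costs the $O(\log d)$ factor of the earlier work.

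To remove the $\log d$, we would use the iterative reduction suggested in the abstract. First apply the earlier result to the sample itself: a uniform sample $S_1$ of size $\epsilon^{-2}\log d\cdot\mathrm{polylog}$ is stable for $P$. Since $S_1$ spans at most $|S_1|$ dimensions, we may assume $d\le|S_1|$, by projecting onto the span, where the optimal center and differences are controlled. Drawing a uniform sample $S_2$ of $S_1$, which is again a uniform sample of $P$, gives a stable coreset with size $\epsilon_2^{-2}\log|S_1|$. We would iterate with error parameters $\epsilon_j$ that increase geometrically toward $\epsilon$, so the total error telescopes to a constant times $\epsilon$, plausibly giving the constant 17 after composing stability twice. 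The dimension parameter drops $d\to O(\epsilon^{-2}\log d)\to O(\epsilon^{-2}\log\log d)\to\cdots$ until it stabilizes at $O(\epsilon^{-2}\log\frac1\epsilon)$. Failure probabilities are chosen summable, for instance $1/(10\cdot 2^j)$, to total at most $1/5$.

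The main obstacle is making the composition legitimate. First, stability must compose: if $S_1$ is stable $(\epsilon_1,\eta_1)$ for $P$ and $S_2$ is stable $(\epsilon_2,\eta_2)$ for $S_1$, we want $S_2$ stable for $P$ with parameters roughly $(\epsilon_2,\eta_1+\eta_2)$. This requires that $\eta_2\le\epsilon_1$, which forces the careful parameter schedule. Second, the earlier bound's $\log d$ must really be replaceable by $\log$ of the number of points in the effective intrinsic dimension when centers leave the span. Here we would argue that the component of $c$ orthogonal to the span affects all distances monotonically and symmetrically, so it can be handled separately. These two points are where we expect the work to lie.
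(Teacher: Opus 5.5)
You have two of the right ingredients: your first display (the cost-difference bound, i.e.\ RCDA), and the plan of re-sampling after reducing the dimension to the current sample size. The composition step, which you correctly flag as the crux, fails as you set it up. Stability composes as follows: if $S_1$ is stable $(\epsilon_1,\eta_1)$ for $P$, $S_2$ is stable $(\epsilon_2,\eta_2)$ for $S_1$, and $\eta_2\le\epsilon_1$, then $S_2$ is stable $(\epsilon_2,\eta_1)$ for $P$. Nothing adds up, so nothing telescopes. Every per-step guarantee you could invoke has the form $\eta_j=c\,\epsilon_j$ with $c>1$. The constraint $\eta_{j+1}\le\epsilon_j$ then forces $\epsilon_{j+1}\le\epsilon_j/c$, a \emph{decreasing} schedule, which contradicts your plan of letting $\epsilon_j$ increase toward $\epsilon$. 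After $t$ rounds, with $t$ growing like $\log^* n$, you only get a stable $(\epsilon_t,c^{t}\epsilon_t)$-coreset, so the ratio $\eta/\epsilon$ grows with $n$. This is exactly the obstruction the paper points out for black-box iterative size reduction of stable coresets.

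Your failure-probability schedule has a related problem: you cannot buy confidence $1-2^{-j}/10$ at logarithmic cost. The uniform-sampling guarantee rests on a Markov bound $\cost(\mu,S)\le\alpha\,\cost(\mu,X)$ at the median $\mu$ of the ground set $X$. This bound is tight: a few far outliers are hit with probability about $1/\alpha$. On that event stability genuinely degrades to $\eta\approx\alpha\epsilon$; take $c'=\mu$ and move $c$ orthogonally to the outliers by about $\epsilon\,\cost(\mu,S)$. So per-step confidence $1-\delta_j$ costs a factor $1/\delta_j$ in the error, which in a stability chain further shrinks the $\epsilon_j$.

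The missing idea is to iterate your first display instead of stability, because RCDA composes additively. Suppose $Q_{i+1}$ is an $\epsilon_{i+1}$-RCDA of $Q_i$, $Q_i$ is a $\bar\epsilon_i$-RCDA of $P$, and $\cost(x,Q_i)\le\beta\,\cost(x,P)$ for all $x$. Then by the triangle inequality $Q_{i+1}$ is an $(\bar\epsilon_i+\beta\,\epsilon_{i+1})$-RCDA of $P$; the second error is measured in $Q_i$-costs, and $\beta$ converts it to $P$-costs. The factor $\beta=2\alpha_{i+1}$ comes from Markov at $\mu^P$ (failure probability $1/\alpha_{i+1}$) combined with RCDA.

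The paper then balances $\alpha_i=\Theta((\log^{(i)}n)^{1/4})$ against $\epsilon_i=\Theta(\epsilon\,(\log^{(i)}n)^{-1/2})$. This makes $\sum_i 1/\alpha_i$ a small constant and $\sum_i 2\alpha_i\epsilon_i\le\epsilon/2$; both sums are dominated by their last terms because iterated logarithms collapse super-exponentially. The price is intermediate sizes $O(\epsilon^{-2}\log\epsilon^{-1}(\log^{(i)}n)^2)$. One extra round with constant parameters brings the size down to $O(\epsilon^{-2}\log\epsilon^{-1})$. Only at the very end is RCDA converted into stability, using a single constant-factor Markov bound ($\cost(y,Q)\le 12\,\cost(y,P)$), which yields the constant $17$. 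Note that here the $\epsilon_i$ do increase with $i$, so your schedule intuition is right for RCDA, just not for stability.

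Finally, the orthogonal-component issue is not where the work lies. The terminal embedding $z\mapsto(\Pi z,\norm{z-\Pi z}_2)$ into $\RR^{|Q_i|}$ preserves all center-to-point distances exactly. What does need care is taking the Dvoretzky distortion to be $\Theta(\epsilon_i/\alpha_i)$, since embedding errors get multiplied by possibly inflated sample costs.
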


This result improves over the stable coreset of \citet{carmel2025stablecoresetsunleashingpower}, 
which is also via uniform sampling and has size $\widetilde{O}(\epsilon^{-2}\log d)$. 
Our bound completely eliminates the dependence on the dimension.
Technically, their main result is for $\ell_1$ metrics, 
and the Euclidean case follows by Dvoretzky’s theorem~\citep{Gordon1988,schechtman2006remark}.
Previously, a similar coreset size $\widetilde{O}(\epsilon^{-2})$, also via uniform sampling, 
was known only for weak coreset~\citet{DBLP:conf/icml/Woodruff024a}, which is strictly less powerful. 
Technically, it is obtained as a corollary of a result for multiple $\ell_p$ regression, 
whose proof leverages $\ell_p$ Lewis weight sampling. 
In contrast, our proof refines the framework for stable coresets recently developed by~\citet{carmel2025stablecoresetsunleashingpower},
and employs an iterative argument to progressively reduce the sample size, see \cref{sec:overview}.

The dependence on $\epsilon$ in our bound is near-tight, 
as an $\Omega(\epsilon^{-2})$ lower bound follows from more general results 
of~\citet{DBLP:conf/colt/ChenD21, DBLP:conf/approx/ParulekarPP21}.
For completeness, we provide a short self-contained proof in Appendix~\ref{app:lowerbound} .

\begin{remark}
We consider only the case $k=1$ in order to obtain clean theoretical guarantees. 
It is well-known that sublinear algorithms face a major challenge when $k>1$,
because small clusters can be missed by sublinear methods.
Prior work by~\citet{DBLP:conf/icml/HuangJL23} (see also references therein)
has formalized this issue through a \emph{balancedness parameter} $\beta \in (0,1]$, which measures the smallest cluster size in an optimal solution, modifying the coreset definition to consider only $\beta$-balanced clusterings of the coreset. 
\end{remark}

\subsection{Technical Overview}
\label{sec:overview}

We now overview the proof of our main result, \cref{thm:main}. 
Our starting point is a result of \citet[Corollary D.2]{carmel2025stablecoresetsunleashingpower}, 
which establishes that a uniform sample of size $O(\epsilon^{-2}\log (d/\epsilon))$ is a stable $(\epsilon, O(\epsilon))$-coreset for $1$-median in $\ell_2^d$ with high constant probability.  
Their proof exploits the fact that $\ell_2^d$ embeds into $\ell_1^{O(d)}$, by Dvoretzky's theorem, 
and applies their analysis for the 1-median problem in $\ell_1$. 
We make a simple observation, that a folklore terminal embedding (\cref{thm:isometric_embedding})
reduces the problem to the case where $d \leq n$,
and therefore a uniform sample $Q$ of size $O(\epsilon^{-2}\log (n/\epsilon))$ suffices for such a coreset, regardless of the ambient dimension $d$ of our dataset $P$.

Building on this simple observation, our high-level idea is to apply iterative size reduction, 
a technique introduced by~\citet{DBLP:conf/soda/BravermanJKW21} for strong coresets, 
hence we need to adapt it to stable coresets.
Suppose we have a uniform sample $Q \subset P$ of size $\widetilde{O}(\epsilon^{-2}\log n)$, 
and then we pick from it a uniform sample $R \subseteq Q$ 
of size $\widetilde{O}(\epsilon^{-2}\log |Q|) = \widetilde{O}(\epsilon^{-2}\log\log n)$. 
By our preceding discussion, with constant high probability, 
$Q$ is a stable coreset of $P$, and also $R$ is a stable coreset of $Q$.
Now if coresets can be composed (i.e., a coreset of a coreset is a coreset), 
which is indeed the case for strong coresets,
then $R$ is also a coreset of our initial dataset $P$. 
The crux is to apply this argument iteratively until the size of the sample no longer decreases,
and by solving $|Q| \leq \epsilon^{-2}\log (|Q|/\epsilon)$
we can easily see that the final sample size is $\widetilde{O}(\epsilon^{-2})$. 
Importantly, while this analysis goes through cumbersome iterations, the algorithm is simple:
A uniform sample $R$ from $Q$, which itself is a uniform sample from $P$,
amounts to a uniform sample $R$ taken directly from $P$.
Therefore, the final coreset is just a uniform sample of size $\widetilde{O}(\epsilon^{-2})$ from the input $P$.

Unfortunately, the above strategy has considerable gaps. 
One issue is that each iteration might fail with a constant probability,
which is too large because the number of iterations is typically $O(\log^* n)$. 
One can decrease this probability by tuning parameters differently
(standard amplification by majority over repetitions would deviate from uniform sampling),
but the final coreset size will depend on $n$, missing our intended goal. 
Another issue is that the errors accumulate over the iterations.
Prior work \citep{DBLP:conf/soda/BravermanJKW21, DBLP:conf/nips/Cohen-AddadSS21, DBLP:conf/stoc/Huang0024}
addressed this by setting a different approximation parameter $\epsilon_i$ for each iteration $i$.
Typically, one starts with a small value, say $\epsilon_1=1/\log n$, 
and increases it exponentially fast until some final value $\epsilon_\text{final}$,
which determines the final coreset size.
For strong coresets, the accumulated error is 
$\prod_i (1+\epsilon_i) \le 1 + \sum_i O(\epsilon_i) \le 1+O(\epsilon_\text{final})$,
however for stable coresets the errors compound completely differently.
Suppose that each iteration yields a stable $(\epsilon, 2\epsilon)$-coreset, 
then we must set $\epsilon_{i+1} = \epsilon_{i}/2 = \epsilon_{1}/2^i$, 
which results overall in a stable $(\epsilon_{\mathrm{final}}, \epsilon_1)$-coreset.
As the number of iterations grows with $n$, albeit slowly like $O(\log^* n)$, 
this guarantee is rather poor and certainly misses our intended goal.

Since iterative size reduction cannot be applied in a black-box manner,
we need to implement the iterations in a more specialized manner.
We identify that (a slight adaptation of) 
a notion called \emph{Relative Cost-Difference Approximation} (RCDA), 
that was introduced by~\cite{carmel2025stablecoresetsunleashingpower}, 
avoids the core issue in the scheme above. 
Informally, the RCDA notion asserts that the difference in costs between any two candidate centers 
is approximately the same when measured on the sample versus on the full dataset.
We compose this RCDA notion over the iterations, which requires some technical adaptations.
It also introduces some error terms and failure probabilities, 
and tracking these carefully over the iterations is somewhat intricate. 
Our main technical lemma (\cref{lemma:iterative_rcda}) shows that the final sample is an RCDA for $P$,
which, in turn, implies that it forms a stable coreset.

\subsection{Related Work}
\label{sec:related}

Strong coresets, which are often simply referred to as coresets, have been extensively studied for the $k$-median and $k$-means problems in Euclidean metrics. 
After a long line of works, which includes~\citep{DBLP:conf/stoc/Har-PeledM04, DBLP:journals/siamcomp/Chen09,feldman2011unified, DBLP:conf/focs/SohlerW18,DBLP:conf/stoc/HuangV20, DBLP:conf/soda/BravermanJKW21, DBLP:conf/stoc/Cohen-AddadSS21}, 
the current state-of-the-art constructions for $k$-median and $k$-means 
match the known lower bound of $\Omega(k\epsilon^{-2})$, up to polylogarithmic factors~\citep{DBLP:conf/stoc/Cohen-AddadLSS22,DBLP:conf/stoc/Huang0024}.
For the $\ell_1$ metric, a coreset of size $\operatorname{poly}(k/\epsilon)$ 
follows from~\citep{DBLP:journals/ml/JiangKLZ24} by going through $\ell_2$-squared.
The case of $k=1$ has received special attention. In low-dimensional Euclidean spaces
\citet{DBLP:conf/icml/HuangHH023} obtain a strong coreset of size $\widetilde{O}(\sqrt{d}/\epsilon)$, 
which was subsequently improved by~\citet{DBLP:conf/icml/AfshaniS24} to $\widetilde{O}(\epsilon^{-d/(d+1)})$ for any constant dimension $d$.

Many coreset variants of the $k$-median and $k$-means problems have been studied as well. 
Closest to our work is the line of weak coresets for $1$-median in Euclidean spaces, 
where the construction can be performed in sublinear time; 
see \cref{tab:comparison} for a detailed comparison of previous results.
Additionally, coresets have been studied for other variants of the geometric median problem 
and the more general $k$-median problem in Euclidean metrics, 
including capacitated $k$-median~\citep{DBLP:conf/focs/BravermanCJKST022,DBLP:conf/soda/Huang0L025}, 
fair clustering~\citep{DBLP:conf/waoa/0001SS19, DBLP:conf/nips/HuangJV19,pmlr-v151-chhaya22a}, 
robust clustering~\citep{HuangJL023,DBLP:conf/icalp/JiangL25, fang2025coresetrobustgeometricmedian}, 
fuzzy clustering~\citep{DBLP:conf/isaac/BlomerBB18}, 
ordered weighted clustering~\citep{DBLP:conf/icml/BravermanJKW19}, 
and time-series clustering~\citep{DBLP:conf/nips/HuangSV21}.

Another closely related problem is the $1$-center, also known as the minimum enclosing ball problem. Strong coresets for this problem in Euclidean metrics require exponential dependence on the dimension~\citep{AHV05}. Weak coresets, in contrast, were studied by~\citet{BHI02,BC08}, who show a tight size bound of $\lceil 1/\epsilon \rceil$.
For the $\ell_1$ metric, \citet{DBLP:conf/innovations/CarmelGJK25} studied both strong and weak coresets, and showed that the coreset size must depend on the dimension.

Lastly, a separate line of work studies sublinear approximation algorithms for the geometric median (or the related $1$-means objective), rather than constructing a coreset~\citep{DBLP:conf/stoc/CohenLMPS16,bertolottisimple}.

\section{Preliminaries}\label{section:preliminaries}

In the 1-median problem on the metric space $\ell_q^d =(\Rd, \norm{\cdot}_q)$, we are given a set $P \subset \Rd$ and the goal is to find a point $x\in\Rd$ that minimizes the sum of distances to $P$, that is, to minimize the cost function
\[
\cost_q(x,P) = \frac{1}{|P|}\sum_{p \in P} \norm{x-p}_q.
\]

We denote by $\mu^P$ an optimal solution to the median problem on $P$, $\median = \argmin_{x \in \Rd} \cost_q(x,P)$.

In this paper, we focus on the 1-median problem in the Euclidean metric, that is, $q=2$. Throughout our analysis, we leverage metric embeddings to translate results between $\ell_2$ and $\ell_1$.

The following theorem is a classical consequence of Gordon’s refinement of Dvoretzky’s theorem~\citep{Gordon1988,schechtman2006remark}.

\begin{theorem}[Dvoretzky’s theorem]\label{thm:gordon-dvoretzky}
For every $\gamma \in (0,1)$ and natural number $d$, there exists a linear map $f: \RR^d \to \RR^m$ with $m = O(\gamma^{-2} d)$ such that
$$\forall  x \in \RR^d, \qquad \norm{f(x)}_1 \in  (1 \pm \gamma) \norm{x}_2.$$
\end{theorem}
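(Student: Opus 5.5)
The plan is to use a scaled Gaussian matrix, $f(x) = \frac{1}{m\sqrt{2/\pi}}\, Gx$, where $G$ is an $m\times d$ matrix with i.i.d.\ $N(0,1)$ entries. We then show that with positive probability $\norm{Gx}_1 \in (1\pm\gamma)\, m\sqrt{2/\pi}$ holds simultaneously for all unit vectors $x\in S^{d-1}$. Since $f$ is linear, this extends to all of $\RR^d$ by homogeneity. For a fixed unit $x$, the vector $Gx$ is a standard Gaussian in $\RR^m$, so $\mathbb{E}\norm{Gx}_1 = m\sqrt{2/\pi}$; this motivates the normalization. The remaining work is uniform control over the sphere.

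A naive argument would combine Gaussian concentration for fixed $x$ with a $\gamma$-net of $S^{d-1}$ of size $(3/\gamma)^d$. This yields $m = O(\gamma^{-2} d\log(1/\gamma))$, which has a spurious logarithm. To obtain the sharp $m=O(\gamma^{-2}d)$, I will instead use Gordon's Gaussian min-max comparison (Gordon's refinement). Write
\[
\norm{Gx}_1 = \max_{y\in B_\infty^m}\langle y, Gx\rangle ,
\]
and consider the Gaussian process $X_{x,y} = \langle y, Gx\rangle$ indexed by $S^{d-1}\times B_\infty^m$. Compare it with $Y_{x,y} = \langle g, y\rangle + \norm{y}_2\langle h, x\rangle$, where $g\in\RR^m$ and $h\in\RR^d$ are independent standard Gaussians.

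Gordon's inequality gives
\[
\mathbb{E}\min_{x}\max_{y}X_{x,y} \ \ge\ \mathbb{E}\norm{g}_1 - \sqrt{m}\,\mathbb{E}\norm{h}_2 \ \ge\ m\sqrt{2/\pi} - \sqrt{md},
\]
using $\max_{y\in B_\infty^m}\norm{y}_2=\sqrt m$. Slepian's inequality, or the same comparison run in the other direction, gives
\[
\mathbb{E}\max_{x}\norm{Gx}_1 \le m\sqrt{2/\pi} + \sqrt{md}.
\]
The sign bookkeeping in Gordon's theorem is delicate: the $\norm{y}_2$ weighting is needed so that the increment conditions hold. Checking that these conditions hold is the step I expect to be the main obstacle, and I would first verify the covariance inequalities $\mathbb{E}(X_{x,y}-X_{x',y'})^2$ versus $\mathbb{E}(Y_{x,y}-Y_{x',y'})^2$ directly.

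Finally, both $G\mapsto \min_x\norm{Gx}_1$ and $G\mapsto\max_x\norm{Gx}_1$ are $\sqrt m$-Lipschitz with respect to the Frobenius norm, since $\norm{Ax}_1\le\sqrt m\norm{Ax}_2\le \sqrt m\norm{A}_F$. Gaussian concentration therefore keeps each within $O(\sqrt m)$ of its mean with probability at least $3/4$. Hence, with positive probability, every unit $x$ satisfies
\[
\norm{Gx}_1 = m\sqrt{2/\pi}\,\bigl(1\pm O(\sqrt{d/m}+1/\sqrt m)\bigr).
\]
Choosing $m = C\gamma^{-2}d$ for a large constant $C$ makes the relative error at most $\gamma$, and fixing such a realization of $G$ gives the desired linear map.
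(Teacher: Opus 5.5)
Your proposal is correct and is exactly the Gordon min--max argument (Gaussian matrix, the duality $\norm{v}_1=\max_{y\in B_\infty^m}\langle y,v\rangle$, a Chevet/Sudakov--Fernique upper bound, and Lipschitz concentration). The paper invokes this argument only by citing Gordon and Schechtman, without giving a proof. When you check the hypotheses, replace the weight $\norm{y}_2$ by the constant $\sqrt m$ (equivalently, restrict $y$ to $\{-1,1\}^m$), since with $\norm{y}_2$ the same-$x$ increment condition fails: $Y$ picks up an extra $(\norm{y}_2-\norm{y'}_2)^2$. Also add an independent $\sqrt m\, z$ to $X_{x,y}$ to equalize variances. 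Neither change affects your bound $m\sqrt{2/\pi}-\sqrt{md}$.
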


Given such embedding $f$, it immediately follows that the cost function is also preserved up to $1 \pm \gamma$. For a set $P \subset \Rd$, we denote by $f(P) = \{f(p) : p \in P\}$ the image of $P$ under $f$.

\begin{corollary}\label{obs:cost_embedding}
Let $f:\RR^d \to \RR^m$ be as in Theorem~\ref{thm:gordon-dvoretzky} with parameter $\gamma$ and let $P \subset \Rd$. Then
$$\forall x \in \Rd, \qquad  \cost_1 (f(x),f(P)) \in (1 \pm \gamma)\cost_2 (x,P).$$
\end{corollary}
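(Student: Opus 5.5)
This is a one-line consequence of Theorem~\ref{thm:gordon-dvoretzky}, obtained by applying its guarantee pointwise to each difference vector $x-p$ and averaging over $P$. The only property of $f$ we need, besides the norm bound, is linearity.

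Fix $x \in \Rd$. For each $p \in P$, linearity of $f$ gives $f(x)-f(p) = f(x-p)$. Hence
\[
\norm{f(x)-f(p)}_1 = \norm{f(x-p)}_1 \in (1\pm\gamma)\norm{x-p}_2 ,
\]
where the membership applies the theorem to the vector $x-p$.

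Next, average these inequalities over $p \in P$. Here we use that $f(P)$ is viewed as a multiset with $|f(P)|=|P|$, so the normalization $\frac{1}{|P|}$ is the same on both sides. Averaging preserves the two-sided bound $(1-\gamma)a_p \le b_p \le (1+\gamma)a_p$ termwise, since all quantities involved are nonnegative. This yields
\[
\cost_1(f(x),f(P)) = \frac{1}{|P|}\sum_{p\in P}\norm{f(x)-f(p)}_1 \in (1\pm\gamma)\cost_2(x,P).
\]

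There is no real obstacle in this argument. The one point requiring care is the normalization convention: if $f$ is not injective on $P$, then $f(P)$ must be counted with multiplicity so that the average matches. This is implicit in writing $\cost_1(f(x),f(P))$ with the same weights as $P$.
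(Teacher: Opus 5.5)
Your proof is correct and is exactly the argument the paper has in mind when it says the corollary ``immediately follows'': linearity gives $f(x)-f(p)=f(x-p)$, then you apply the norm bound of Theorem~\ref{thm:gordon-dvoretzky} termwise and average over $P$. Your multiplicity caveat is harmless but vacuous, because $\norm{f(z)}_1 \ge (1-\gamma)\norm{z}_2 > 0$ for every $z \neq 0$ (as $\gamma<1$), so $f$ is injective.
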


We will need the following folklore isometric terminal embedding in $\ell_2$. 
Although variants of it appear in \citep{DBLP:conf/approx/ElkinFN15,DBLP:conf/stoc/HuangV20}, 
we include a proof for completeness.

\begin{theorem}[Folklore]\label{thm:isometric_embedding}
Let $P \subset \Rd$ of size $n$. Then there exists an isometric terminal embedding $f:\Rd \rightarrow \RR^n$, that is, for all $x \in \Rd$ and $p \in P$, $\norm{f(x)-f(p)}_2  =\norm{x-p}_2$.
\end{theorem}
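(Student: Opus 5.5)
The plan is to take $f(x) = (\pi(x), \norm{x-\pi(x)}_2)$, where $\pi$ is the orthogonal projection onto the affine hull of $P$, with that hull identified isometrically with a subspace of dimension at most $n-1$. First, fix $p_0 \in P$ and let $V = \operatorname{span}\{p - p_0 : p \in P\}$; then $\dim V \le n-1$. Choose a linear isometry $U: V \to \RR^{n-1}$, for instance by sending an orthonormal basis of $V$ to the first standard basis vectors.

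For $x \in \Rd$, write $x - p_0 = v + w$ with $v \in V$ and $w \in V^\perp$; this decomposition is unique and is given by orthogonal projection. Define
\[
f(x) \eqdef \bigl(U v,\ \norm{w}_2\bigr) \in \RR^{n-1}\times\RR = \RR^n .
\]
For $p \in P$ we have $p - p_0 \in V$, so $f(p) = (U(p-p_0), 0)$.

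The key step is the distance computation. For $x \in \Rd$ and $p \in P$, we have $x - p = (v - (p-p_0)) + w$, where the first summand lies in $V$ and $w \in V^\perp$. By Pythagoras,
\[
\norm{x-p}_2^2 = \norm{v-(p-p_0)}_2^2 + \norm{w}_2^2 .
\]
On the other hand, since $U$ is a linear isometry on $V$,
\[
\norm{f(x)-f(p)}_2^2 = \norm{U(v-(p-p_0))}_2^2 + (\norm{w}_2 - 0)^2 = \norm{v-(p-p_0)}_2^2 + \norm{w}_2^2 .
\]
The two expressions coincide, which gives the claim. The image dimension is at most $(n-1)+1 = n$; if $\dim V$ is smaller, we pad with zero coordinates.

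There is no real obstacle here. The only point that needs attention is that $f$ is not linear, because of the norm coordinate, so we must check distances only between an arbitrary $x$ and a terminal $p$, not between two arbitrary points. This works because every terminal has zero last coordinate. Edge cases such as $n=1$, where $V=\{0\}$ and $f(x) = \norm{x-p_0}_2$ is a single coordinate, are handled by the same formula.
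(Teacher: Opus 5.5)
Your proof is correct and follows the paper's argument. Both translate a terminal to the origin, orthogonally project onto the span of the shifted terminals (identified isometrically with $\RR^{n-1}$), append the norm of the orthogonal component as the last coordinate, and conclude by Pythagoras. Your handling of the case $\dim V < n-1$, via an isometry onto the first coordinates, is slightly more careful than the paper, which tacitly assumes $V$ has dimension exactly $n-1$.
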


\begin{proof}
Let $P=\{p_1,\dots,p_n\} \subset \Rd$ be a terminal set. By shifting the points, we can assume w.l.o.g. that $p_n$ lies at the origin. Consider the $(n-1)$-dimensional subspace $V$ spanned by the vectors $p_1,\dots,p_{n-1}$. As every $d$-dimensional subspace of $\ell_2$ is isometric to $\mathbb{R}^d$, we can also assume w.l.o.g. that $V=\mathbb{R}^{n-1}$. Let $\Pi : \Rd \to V$ be the orthogonal projection map onto $V$ and define the terminal embedding $f: \Rd \to V \times \mathbb{R} = \mathbb{R}^{n}$ by $f(z)=(\Pi(z), \|\Pi(z)-z\|_2)$. Using the Pythagorean theorem, we conclude that for every $p \in P \subset V$ and $x \in \Rd$
\[
    \|f(p)-f(x)\|_2^2= \|\Pi(p)-\Pi(x)\|_2^2+(\|\Pi(x)-x\|_2-0)^2 = \|p-\Pi(x)\|_2^2+\|\Pi(x)-x\|_2^2 =\|p-x\|_2^2.
\]
\end{proof}

\section{RCDA in \texorpdfstring{$\ell_2$}{l2} via Dimension Reduction}\label{section:rcda}

The RCDA property was formulated in~\citep{carmel2025stablecoresetsunleashingpower}. Here we introduce a slightly modified definition.

\begin{definition}[Relative Cost-Difference Approximation]\label{def:rcda}
We say that $Q \subseteq P$ is an \emph{$\epsilon$-RCDA$_q$} of $P$ if
\begin{equation}\label{eq:rcda}
  \forall x,y \in \Rd, 
  \qquad
  \Big| \big[\cost_q(x,P) - \cost_q(y,P)\big]
    - \big[\cost_q(x,Q) - \cost_q(y,Q)\big] \Big|
  \leq \epsilon\cdot (\cost_q(x,P) +\cost_q(y,P)).
\end{equation}
We will denote the left-hand side by $\rcda_q(x,y,Q,P)$ for brevity.
\end{definition}

Our definition differs from that in~\citep{carmel2025stablecoresetsunleashingpower} in that \eqref{eq:rcda} quantifies over all pairs $x,y \in \Rd$, whereas the original property was defined with respect to $x$ and the median $\median$. Specifically, the original definition required
\begin{equation}\label{eq:old_rcda}
\forall x\in\Rd, \qquad \rcda_q(x,\median,Q,P) \leq \epsilon \cost_q(x,P).
\end{equation}

However, the two definitions are essentially equivalent. Indeed, if for every $x \in \Rd$, \\ $\rcda_q (x,\mu^P,Q,P) \leq \epsilon \cdot \cost_q(x,P)$, then for every $x,y \in \Rd$,
\[
\rcda_q (x,y,Q,P) \leq \rcda_q (x,\mu^P,Q,P) + \rcda_q (\mu^P,y,Q,P) \leq \epsilon \cdot (\cost_q(x,P) + \cost_q(y,P)),
\]
where the first inequality follows from the triangle inequality. Conversely, if for every $x,y \in \Rd$, we have $\rcda_q (x,y,Q,P) \leq \epsilon \cdot (\cost_q(x,P) + \cost_q(y,P))$, then for every $x \in X$,
\[
\rcda_q (x,\mu^P,Q,P) \leq \epsilon \cdot (\cost_q(x,P) + \cost_q(\mu^P,P)) \leq 2\epsilon \cdot \cost_q(x,P).
\]

The main technical result from \citep{carmel2025stablecoresetsunleashingpower} establishes the RCDA property for a uniform sample in $\ell_1^d$.

\begin{lemma}[Lemma~4.6 in \citep{carmel2025stablecoresetsunleashingpower}]\label{lem:rcda_in_ell1}
Let $\epsilon \in (0,1)$ and let $P \subset \Rd$ of size $n$.
Then a uniform sample $Q \subseteq P$ of size $O(\epsilon^{-2} \log \frac{d}{\delta})$ is an $\epsilon$-RCDA$_1$ of $P$ in $\ell_1^d$ with probability at least $1-\delta$.
\end{lemma}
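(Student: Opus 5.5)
The plan is to exploit the fact that in $\ell_1^d$ both the cost and the RCDA quantity decompose over coordinates, so everything reduces to a one-dimensional statement.

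\textbf{Reduction to one dimension.} For $x\in\Rd$ we have
\[
\cost_1(x,P)=\sum_{j=1}^d c_j(x_j,P), \qquad c_j(a,P)\eqdef\frac1{|P|}\sum_{p\in P}|a-p_j| ,
\]
and the same identity holds with $Q$ in place of $P$. Hence the quantity inside $\rcda_1(x,y,Q,P)$ is a sum over $j$ of one-dimensional differences. Suppose that for every coordinate $j$ the projected multiset $Q_j$ is a $1$-dimensional $\epsilon'$-RCDA of $P_j$, meaning
\[
\big|[c_j(a,P)-c_j(b,P)]-[c_j(a,Q)-c_j(b,Q)]\big|\le \epsilon'\,(c_j(a,P)+c_j(b,P)) \quad \text{for all } a,b\in\RR .
\]
Then the triangle inequality and summation over $j$ give the $d$-dimensional $\epsilon'$-RCDA$_1$ property. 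So it suffices to show the following: a uniform sample of size $O(\epsilon^{-2}\log\frac1{\delta'})$ is a $1$-dimensional $O(\epsilon)$-RCDA with probability $1-\delta'$. We then take $\delta'=\delta/d$, apply a union bound over the $d$ coordinates, and rescale $\epsilon$ by a constant. Note that the same sample $Q$ is used for all coordinates, which is fine for a union bound.

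\textbf{The one-dimensional case.} By the equivalence discussed after \cref{def:rcda}, it suffices to compare an arbitrary $a$ against a median $m$ of $P_j$ (the median of $P$, not of $Q$), at the cost of a factor $2$. Let $F_P,F_Q$ be the empirical CDFs. Since $\frac{d}{dt}|t-p|=\mathrm{sign}(t-p)$, for $a\ge m$ we get
\[
c(a,P)-c(m,P)=\int_m^a (2F_P(t)-1)\,dt ,
\]
and the same formula holds with $Q$. Therefore the RCDA difference equals $2\int_m^a (F_Q(t)-F_P(t))\,dt$, whose absolute value is at most $2(a-m)\sup_t|F_Q(t)-F_P(t)|$.

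On the other hand, for $t\in[m,a]$ we have $F_P(t)\ge 1/2$, and
\[
c(a,P)=\int_{-\infty}^a F_P(t)\,dt+\int_a^\infty(1-F_P(t))\,dt \ge \int_m^a F_P(t)\,dt\ge \frac{a-m}{2}.
\]
Consequently, if $\sup_t|F_Q-F_P|\le\epsilon$, the difference is at most $4\epsilon\, c(a,P)$. The case $a<m$ is symmetric.

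\textbf{Main obstacle: uniform CDF deviation.} It remains to show $\sup_t|F_Q(t)-F_P(t)|\le\epsilon$ with probability $1-\delta'$ when $|Q|=O(\epsilon^{-2}\log\frac1{\delta'})$, for a sample drawn without replacement. This is a DKW-type bound, and I expect it to be the only genuinely probabilistic step. I would prove it directly, without invoking DKW.
\begin{itemize}
\item Fix the $O(1/\epsilon)$ thresholds at the $(\epsilon/2)$-quantiles of $P_j$.
\item At each threshold, $F_Q(t)$ is an average of indicators sampled without replacement, so Hoeffding's inequality applies; Hoeffding's comparison shows that sampling without replacement is no worse than with replacement.
\item Monotonicity of both CDFs interpolates the bound between consecutive thresholds, at an additional cost of $\epsilon/2$.
\end{itemize}
A union bound over the thresholds costs a factor of order $\log(1/\epsilon)$ inside the logarithm. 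To obtain exactly $\log(d/\delta)$ without an extra $\log(1/\epsilon)$, I would instead cite the DKW inequality (Massart's constant), which holds for sampling without replacement via the same comparison or via a chaining argument over the quantiles. If only the weaker bound is available, it still suffices for the use made in this paper.
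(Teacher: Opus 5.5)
Your argument is correct and appears to follow the same route as the cited Lemma~4.6 (the paper does not reprove it). The $\ell_1$ cost splits over coordinates, and in one dimension the cost-difference error against a median $m$ is $2\bigl|\int_m^a (F_Q-F_P)\,dt\bigr| \le 4\sup_t|F_Q(t)-F_P(t)|\cdot c(a,P)$; a union bound over the $d$ coordinates, each with failure probability $\delta/d$, then gives the $\log\frac{d}{\delta}$ term.

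The only probabilistic ingredient is a uniform half-line $\epsilon$-approximation bound of size $O(\epsilon^{-2}\log\frac{1}{\delta'})$. The paper's remark indicates the cited proof takes this from the Li--Long--Srinivasan theorem, which holds without replacement; you take it from DKW instead. One small correction: Hoeffding's convex-order comparison transfers DKW to sampling without replacement only through the moment generating function of $\sup_t|F_Q-F_P|$. So you get the bound up to constants, not with Massart's constant, but constants are all the $O(\cdot)$ statement needs.
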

\begin{remark}
While \cite{carmel2025stablecoresetsunleashingpower} does not explicitly state whether the uniform sample is taken with or without replacement, their result holds in both cases.
The only step in their proof that depends on the sampling scheme is a theorem from \citet{DBLP:journals/jcss/LiLS01} on $\epsilon$-approximations, which applies equally to sampling with and without replacement~\citep{CM22}.
\end{remark}

A direct consequence of the RCDA property, is a uniform bound over all costs.

\begin{proposition}\label{prop:cost_inflation}
Let $\epsilon \in (0,1/2)$ and $\alpha \geq 1$. If $Q$ is an $\epsilon$-RCDA$_q$ of $P$ and $\cost_q(\median,Q) \leq \alpha \cost_q(\median,P)$, then,
$$\forall x \in \Rd, \qquad \cost_q(x,Q) \leq 2\alpha \cost_q(x,P).$$
\end{proposition}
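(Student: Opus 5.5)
Apply the RCDA inequality \eqref{eq:rcda} to the pair $x$ and $y=\median$, and then use the optimality of $\median$ on $P$. Fix $x\in\Rd$. Dropping the absolute value in \eqref{eq:rcda} in the direction we need gives
\[
\cost_q(x,Q) \leq \cost_q(\median,Q) + \big[\cost_q(x,P)-\cost_q(\median,P)\big] + \epsilon\big(\cost_q(x,P)+\cost_q(\median,P)\big).
\]

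Next, substitute the hypothesis $\cost_q(\median,Q)\le \alpha\cost_q(\median,P)$. The right-hand side becomes
\[
(1+\epsilon)\cost_q(x,P) + (\alpha-1+\epsilon)\cost_q(\median,P).
\]
Since $\median$ minimizes $\cost_q(\cdot,P)$, we have $\cost_q(\median,P)\le\cost_q(x,P)$. The coefficient $\alpha-1+\epsilon$ is nonnegative because $\alpha\ge1$, so we may replace $\cost_q(\median,P)$ by $\cost_q(x,P)$. This yields
\[
\cost_q(x,Q)\le(\alpha+2\epsilon)\cost_q(x,P).
\]

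Finally, $\epsilon<1/2$ gives $\alpha+2\epsilon<\alpha+1\le 2\alpha$, using $\alpha\ge1$ once more. This proves the claim.

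There is no real obstacle. The only care needed is to check that the coefficient of $\cost_q(\median,P)$ is nonnegative before bounding that term by $\cost_q(x,P)$. This is exactly where $\alpha\ge1$ enters, and it is used again in the last step.
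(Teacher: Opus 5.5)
Your proof is correct and essentially matches the paper's. Like the paper, you apply the RCDA bound to the pair $(x,\median)$, substitute $\cost_q(\median,Q)\le\alpha\cost_q(\median,P)$, use optimality of $\median$ to reach $(\alpha+2\epsilon)\cost_q(x,P)$, and finish with $2\epsilon\le 1\le\alpha$ — and your explicit check that $\alpha-1+\epsilon\ge 0$ makes precise a step the paper leaves implicit.
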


\begin{proof}
By Definition~\ref{def:rcda}, we have $\rcda_q(x,\median,Q,P) \leq \epsilon\cdot (\cost_q(x,P) +\cost_q(\median,P))$.
Therefore:
\begin{align*}
    \cost_q(x,Q) &\leq \cost_q(x,P) + \cost_q(\median,Q) - \cost_q(\median,P) + \epsilon(\cost_q(x,P)+\cost_q(\median,P))\\
    &\leq (1+\epsilon)\cost_q(x,P) + (\alpha-1+\epsilon)\cost_q(\median,P)\\
    &\leq (1+\epsilon)\cost_q(x,P) + (\alpha-1+\epsilon)\cost_q(x,P) \tag{$\median$ is optimal}\\
    &= (\alpha + 2\epsilon)\cost_q(x,P) \leq 2\alpha\cost_q(x,P). \tag{$2\epsilon \leq 1 \leq \alpha$}
\end{align*}
\end{proof}

We now establish that uniform sampling yields RCDA in $\ell_2^d$ by reducing to the $\ell_1$ case.

\begin{lemma}\label{lemma:rcda_in_ell2}
Let $\epsilon \in (0,1)$, $\alpha \geq 1$, and let $P \subset \Rd$ of size $n$.
Then, with probability at least $1-1/\alpha-\delta$, a uniform sample $Q \subseteq P$ of size $O(\epsilon^{-2} \log \frac{\alpha \cdot n}{\epsilon \cdot \delta})$ satisfies:
\begin{enumerate}
    \item $Q$ is an $\epsilon$-RCDA$_2$ of $P$ in $\ell_2^d$, and,
    \item For every $x\in \Rd$, $\cost_2(x,Q) \leq 3\alpha \cost_2(x,P)$.
\end{enumerate}
\end{lemma}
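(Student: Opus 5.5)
The plan is to reduce to the $\ell_1$ case through two embeddings, apply \cref{lem:rcda_in_ell1} there, and transfer the guarantee back to $\ell_2$. The embedding errors must be absorbed into the RCDA parameter $\epsilon$. The second item will then follow from Markov's inequality together with \cref{prop:cost_inflation}.

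\textbf{Step 1 (dimension reduction).} Apply \cref{thm:isometric_embedding} to $P$, giving $g:\Rd\to\RR^n$ with $\norm{g(x)-g(p)}_2=\norm{x-p}_2$ for all $x\in\Rd$, $p\in P$. Hence $\cost_2(x,S)=\cost_2(g(x),g(S))$ for every $S\subseteq P$. Next apply \cref{thm:gordon-dvoretzky} in $\RR^n$ with $\gamma=\epsilon/4$ to get a linear $f:\RR^n\to\RR^m$, $m=O(\epsilon^{-2}n)$. Set $h=f\circ g$. By linearity of $f$ and \cref{obs:cost_embedding}, for every $S\subseteq P$ and $x\in\Rd$,
\[
\cost_1(h(x),h(S))\in(1\pm\gamma)\cost_2(x,S).
\]
Note that $h$ is not linear, but this does not matter: we only ever evaluate costs at points $h(x)$ relative to images of subsets of $P$.

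\textbf{Step 2 (RCDA in $\ell_1^m$ and transfer).} Run \cref{lem:rcda_in_ell1} on $h(P)\subset\RR^m$ with parameter $\epsilon/4$ and failure probability $\delta$. A uniform sample of $P$ induces a uniform sample of $h(P)$; if $h$ is not injective on $P$, treat $h(P)$ as a multiset indexed by $P$. The required sample size is
\[
O(\epsilon^{-2}\log(m/\delta)) = O\bigl(\epsilon^{-2}\log\tfrac{n}{\epsilon\delta}\bigr).
\]
With probability $1-\delta$, for all $x,y$ we have $\rcda_1(h(x),h(y),h(Q),h(P))\le\frac{\epsilon}{4}\bigl(\cost_1(h(x),h(P))+\cost_1(h(y),h(P))\bigr)$. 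To return to $\ell_2$, write each $\ell_2$ cost as the corresponding $\ell_1$ cost plus an error of at most $\gamma$ times the $\ell_2$ cost. This gives
\[
\rcda_2(x,y,Q,P)\le \rcda_1+\gamma\bigl(\cost_2(x,P)+\cost_2(y,P)+\cost_2(x,Q)+\cost_2(y,Q)\bigr).
\]

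\textbf{Step 3 (the obstacle: bounding the $Q$-costs).} The main obstacle is the terms $\cost_2(x,Q)$ in this error, which need not be comparable to $\cost_2(x,P)$ a priori. This is exactly where $\alpha$ enters. Since $\mathbb{E}[\cost_2(\median,Q)]=\cost_2(\median,P)$, Markov's inequality gives $\cost_2(\median,Q)\le\alpha\cost_2(\median,P)$ with probability at least $1-1/\alpha$.

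On that event, use the $\ell_1$ RCDA directly, as in the proof of \cref{prop:cost_inflation}, applied to $h(x)$ and $h(\median)$. Combined with the $(1\pm\gamma)$ distortion, this yields $\cost_2(x,Q)\le(\alpha+O(\epsilon))\cost_2(x,P)$, which is at most $3\alpha\cost_2(x,P)$ for every $x$; this is item 2. Plugging this into the error term of Step 2 gives
\[
\rcda_2\le\bigl(\tfrac{\epsilon}{4}(1+\gamma)+\gamma(1+3\alpha)\bigr)(\cost_2(x,P)+\cost_2(y,P)).
\]
Because of the factor $\alpha$, we should instead choose $\gamma=\epsilon/(16\alpha)$. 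Then $m=O(\alpha^2\epsilon^{-2}n)$, which contributes only $\log(\alpha n/\epsilon)$ inside the logarithm, matching the claimed sample size. Item 1 then holds with a final constant at most $\epsilon$.

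A union bound over the Markov event and the \cref{lem:rcda_in_ell1} event gives success probability at least $1-1/\alpha-\delta$.
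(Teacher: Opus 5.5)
Your proposal is correct and follows essentially the same route as the paper: reduce to $d=n$ via the terminal embedding, embed into $\ell_1^m$ by Dvoretzky with $\gamma=\Theta(\epsilon/\alpha)$, invoke \cref{lem:rcda_in_ell1}, use Markov plus the cost-inflation argument to obtain item 2, and then absorb the $\gamma$-weighted $Q$-cost error terms using item 2. The differences are minor. One is the constants. The other is that you run the Markov and cost-inflation step on the $\ell_2$ side, using the $\ell_2$-optimality of $\median$; this neatly sidesteps the fact that $f(\median)$ need not be the exact $\ell_1$ median of $f(P)$ when the paper invokes \cref{prop:cost_inflation} in $\ell_1^m$.
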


\begin{proof}
By Theorem~\ref{thm:isometric_embedding}, we may assume without loss of generality that $d=n$.

Let $\gamma = \frac{\epsilon}{10\alpha}$. By Theorem~\ref{thm:gordon-dvoretzky}, there exists $m=O(\alpha^2 \epsilon^{-2} n)$ and a linear map $f:\RR^d \rightarrow \RR^m$ such that
for all $x \in \RR^d$, $\norm{f(x)}_1 \in (1\pm \gamma) \norm{x}_2.$

By Lemma~\ref{lem:rcda_in_ell1}, a uniform sample $Q \subseteq P$ of size 
$O(\epsilon^{-2} \log \frac{m}{\delta}) = O(\epsilon^{-2} \log \frac{\alpha n}{\epsilon \delta})$
is a $\frac{\epsilon}{2}$-RCDA$_1$ of $f(P)$ in $\ell_1^m$ with probability at least $1-\delta$. That is,
\begin{equation}\label{eq:rcda_ell1}
\forall x,y \in \RR^d, \qquad \rcda_1(f(x),f(y),f(Q),f(P)) \leq \frac{\epsilon}{2} (\cost_1(f(x),f(P)) +\cost_1(f(y),f(P))).
\end{equation}

Let $\median$ be the optimal median of $P$ in $\ell_2^d$.
Since $\mathbb{E}_Q[\cost_1(f(\median),f(Q))] = \cost_1(f(\median),f(P))$, by Markov's inequality, with probability at least $1-1/\alpha$,
\begin{equation}\label{eq:markov}
    \cost_1 (f(\median),f(Q)) \leq \alpha \cost_1(f(\median),f(P)). 
\end{equation}

In this event, by Proposition~\ref{prop:cost_inflation} applied in $\ell_1^m$ (noting $\epsilon/2 < 1/2$), 
\begin{equation}\label{eq:cost_in_ell1}
\forall x\in \Rd, \qquad \cost_1(f(x),f(Q)) \leq 2\alpha \cost_1 (f(x),f(P)).    
\end{equation}

Using Corollary~\ref{obs:cost_embedding}, for all $x \in \RR^d$,
\begin{align*}
    \cost_2(x,Q) &\leq (1+\gamma) \cost_1(f(x),f(Q)) \leq (1+\gamma) \cdot 2\alpha \cost_1 (f(x),f(P)) \\
    &\leq \frac{2\alpha(1+\gamma)}{1-\gamma} \cost_2(x,P) \leq 3\alpha \cost_2 (x,P),
\end{align*}
where the last inequality uses $\gamma = \epsilon/(10\alpha) \leq 1/10$, establishing property (2).

To show property (1), let $x,y \in \RR^d$. 
Again using Corollary~\ref{obs:cost_embedding} applied to each of the four cost terms and the triangle inequality:
\begin{align*}
  \rcda_2(x,y,Q,P) 
  &\leq \rcda_1(f(x),f(y),f(Q),f(P)) + \gamma(\cost_1 (f(x),f(P)) + \cost_1 (f(y),f(P)) \\
  &\qquad +\cost_1 (f(x),f(Q)) +\cost_1 (f(y),f(Q)))
\end{align*}

By the union bound  \eqref{eq:rcda_ell1} and \eqref{eq:cost_in_ell1} holds with probability at least $1-1/\alpha-\delta$. We get:

\begin{align*}
  \rcda_2(x,y,Q,P) 
  &\leq \frac{\epsilon}{2} (\cost_1(f(x),f(P)) +\cost_1(f(y),f(P))) \\
  &\quad + \gamma(2\alpha+1) (\cost_1(f(x),f(P)) + \cost_1(f(y),f(P)))\\
  &\leq \frac{1}{1-\gamma} \left(\frac{\epsilon}{2} + \gamma(2\alpha+1)\right) (\cost_2(x,P) +\cost_2(y,P))\\
  &\leq \frac{10}{9}\left(\frac{\epsilon}{2} + \frac{\epsilon(2\alpha+1)}{10\alpha}\right) (\cost_2(x,P) +\cost_2(y,P))\\
  &\leq \frac{10}{9}\left(\frac{\epsilon}{2} + \frac{3\epsilon}{10}\right) (\cost_2(x,P) +\cost_2(y,P)) \leq \epsilon (\cost_2(x,P) +\cost_2(y,P)).
\end{align*}
\end{proof}

\section{Upper Bound (Proof of~\cref{thm:main})}
\label{section:mainthm}

To apply the iterative size reduction technique, we first establish the following composition property for RCDA, which allows errors to be tracked across iterations.

\begin{proposition}\label{prop:rcda_composition}
Let $Q'' \subseteq Q' \subseteq Q$ with $\epsilon, \epsilon' \in (0,1)$ and $\beta \geq 1$. 
Suppose $Q'$ is an $\epsilon$-RCDA$_q$ of $Q$ and $Q''$ is an $\epsilon'$-RCDA$_q$ of $Q'$. If for every $x \in \Rd$ we have $\cost_q(x,Q') \leq \beta \cost_q(x,Q)$, then, $Q''$ is an $(\epsilon + \epsilon' \beta)$-RCDA$_q$ of $Q$.
\end{proposition}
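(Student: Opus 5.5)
The plan is to fix $x,y \in \Rd$ and split the quantity $\rcda_q(x,y,Q'',Q)$ with the triangle inequality, inserting the intermediate set $Q'$. Write $D_S(x,y) = \cost_q(x,S) - \cost_q(y,S)$ for $S \in \{Q, Q', Q''\}$. Then
\[
\bigl| D_Q(x,y) - D_{Q''}(x,y) \bigr| \leq \bigl| D_Q(x,y) - D_{Q'}(x,y) \bigr| + \bigl| D_{Q'}(x,y) - D_{Q''}(x,y) \bigr| ,
\]
and the two terms on the right are exactly $\rcda_q(x,y,Q',Q)$ and $\rcda_q(x,y,Q'',Q')$. Because the definition of RCDA now quantifies over all pairs $x,y$ and not only pairs involving a median, no median points need to be introduced, and the two terms can be handled independently.

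For the first term we invoke the hypothesis that $Q'$ is an $\epsilon$-RCDA$_q$ of $Q$, which bounds it by $\epsilon(\cost_q(x,Q)+\cost_q(y,Q))$.

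For the second term we use that $Q''$ is an $\epsilon'$-RCDA$_q$ of $Q'$, which gives the bound $\epsilon'(\cost_q(x,Q')+\cost_q(y,Q'))$. This bound is measured in costs on $Q'$, not on $Q$. This is the only step that needs care, and it is where the assumption $\cost_q(z,Q') \leq \beta \cost_q(z,Q)$ enters. Applying it at $z=x$ and $z=y$ converts the bound into $\epsilon'\beta(\cost_q(x,Q)+\cost_q(y,Q))$.

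Summing the two bounds gives $(\epsilon+\epsilon'\beta)(\cost_q(x,Q)+\cost_q(y,Q))$. Since $x,y$ were arbitrary, this is the claim. I do not expect a real obstacle here; the argument is just the triangle inequality plus a change of reference set.
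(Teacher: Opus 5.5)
Your proof is correct and is the same argument as the paper's. Both use the triangle inequality through the intermediate set $Q'$, bound each of the two pieces by the respective RCDA hypotheses, and then apply $\cost_q(\cdot,Q')\le\beta\cost_q(\cdot,Q)$ at $x$ and $y$ to get $(\epsilon+\epsilon'\beta)(\cost_q(x,Q)+\cost_q(y,Q))$.
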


\begin{proof}
For $x,y \in \Rd$, using the triangle inequality, 
\begin{align*}
\rcda_q(x,y,Q,Q'') &
\leq \rcda_q(x,y,Q,Q') + \rcda_q(x,y,Q',Q'')\\
&\leq \epsilon (\cost_q (x,Q) + \cost_q (y,Q)) + \epsilon'(\cost_q (x,Q') + \cost_q (y,Q'))\\
&\leq (\epsilon + \epsilon' \beta)(\cost_q (x,Q) + \cost_q (y,Q)).
\end{align*}
\end{proof}

We are now ready to prove our main technical lemma. The proof involves balancing the errors with the guaranteed size reduction.
To streamline the presentation, we use explicit constants and while we attempt to avoid huge constants,
they are probably far from optimal. 

\begin{lemma}\label{lemma:iterative_rcda}
Let $\epsilon \in (0,\frac{1}{4})$ and let $P \subset \Rd$ of size $n$.
Then, with probability at least $1-\frac{1}{30}$, a uniform sample $Q \subseteq P$ of size $O(\epsilon^{-2} \log \frac{1}{\epsilon  })$ is an $\epsilon$-RCDA$_2$ of $P$ in $\ell_2^d$.
\end{lemma}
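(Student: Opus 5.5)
We realize the uniform sample $Q$ of size $s_{\mathrm{final}}=C\epsilon^{-2}\log\frac1\epsilon$ as the last set in a nested chain of uniform samples $P=Q_0\supseteq Q_1\supseteq\cdots\supseteq Q_T=Q$. Here $Q_{i}$ is a uniform sample of size $s_{i}$ drawn from $Q_{i-1}$, with $s_1>s_2>\cdots>s_T$. A uniform sample of a uniform sample (without replacement) is a uniform sample of $P$, so the distribution of $Q_T$ is exactly that of a uniform sample of size $s_T$ from $P$. It therefore suffices to show that, with probability at least $1-\frac1{30}$, $Q_T$ is an $\epsilon$-RCDA$_2$ of $P$. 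If $n\le s_{\mathrm{final}}$ the claim is trivial (take $Q=P$), so assume $n$ is large.

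At step $i$ we apply Lemma~\ref{lemma:rcda_in_ell2} to the set $Q_{i-1}$ of size $s_{i-1}$, with parameters $\epsilon_i,\alpha_i,\delta_i$. With probability at least $1-1/\alpha_i-\delta_i$, $Q_i$ is an $\epsilon_i$-RCDA$_2$ of $Q_{i-1}$ and satisfies $\cost_2(x,Q_i)\le 3\alpha_i\cost_2(x,Q_{i-1})$ for all $x$. The required sample size is $s_i=C_0\epsilon_i^{-2}\log\frac{\alpha_i s_{i-1}}{\epsilon_i\delta_i}$.

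Composing via Proposition~\ref{prop:rcda_composition} by induction, $Q_T$ is an $\eta$-RCDA of $P$, where
\[
\eta=\sum_{i=1}^T \epsilon_i\,\beta_{i-1},\qquad \beta_{i-1}=\prod_{j<i}3\alpha_j .
\]
Here $\beta_{i-1}$ bounds $\cost_2(x,Q_{i-1})/\cost_2(x,P)$ by chaining the cost-inflation bounds. The total failure probability is at most $\sum_i (1/\alpha_i+\delta_i)$.

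The crux, which I expect to be the main obstacle, is choosing parameters so that three conditions hold simultaneously:
\begin{itemize}
\item[(a)] $\sum_i(1/\alpha_i+\delta_i)\le 1/30$;
\item[(b)] $\sum_i\epsilon_i\prod_{j<i}3\alpha_j\le\epsilon$;
\item[(c)] the sizes shrink roughly like $s_i\approx \epsilon_i^{-2}\log s_{i-1}$ and reach $O(\epsilon^{-2}\log\frac1\epsilon)$.
\end{itemize}
Conditions (a) and (b) pull against each other. Failure probabilities must be summable, so the $\alpha_i$ must grow, but the products $\prod 3\alpha_j$ then grow as well and force the $\epsilon_i$ to shrink.

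My plan is to index the steps backwards from the end. The early steps, where $s_{i-1}$ is huge, can afford tiny $\epsilon_i$ and $\delta_i$ and large $\alpha_i$, because the cost is only logarithmic in $1/\epsilon_i$ relative to $\log s_{i-1}$. Concretely, let $k=T-i$ be the number of remaining steps, and take
\[
\alpha_i=c\,2^{k}\ \text{(or similar)},\qquad \delta_i=c'2^{-k},
\]
so that (a) is a geometric sum. Then choose $\epsilon_i$ so that $\epsilon_i\prod_{j<i}3\alpha_j\le \epsilon\,2^{-k-2}$.

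The concern is that $\prod_{j<i}\alpha_j$ includes early steps with large $\alpha_j$. To avoid this, I would instead make $\alpha$ decrease along the chain: take $\alpha_i$ large at early steps and only constant near the end. The product $\prod_{j<i}3\alpha_j$ can be as large as $\exp(O(T^2))$, but this only enters as a $\log$ inside $s_i$ through $\epsilon_i^{-2}$ polynomially. That is the dangerous part. I would instead fix $\alpha_i\equiv\alpha$, a constant such as 200, for the last few steps, and let the first step (from $n$) use $\alpha_1,\delta_1$ polynomially small in failure, with $\epsilon_1=\epsilon/\mathrm{polylog}$. Since $s_0=n$ is huge, the sample size $\epsilon_1^{-2}\log n$ still collapses to about $\log n\cdot\mathrm{polylog}$, and the next steps similarly reach $\log\log n$ and so on.

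I will verify that with $T=O(\log^* n)$ steps, $\epsilon_i$ decreasing geometrically backward from $\epsilon_T=\epsilon/(4\cdot 3\alpha)$-ish, and the failure budget of the late constant-$\alpha$ steps handled by making those steps few (a constant number, since once $s_{i-1}\le \mathrm{poly}(1/\epsilon)$ one step reaches $O(\epsilon^{-2}\log\frac1\epsilon)$), all three sums close. The final size is then dominated by the last step, $O(\epsilon^{-2}\log\frac1\epsilon)$.
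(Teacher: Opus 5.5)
Your outer scheme (nested uniform samples, Lemma~\ref{lemma:rcda_in_ell2} at each step, composition via Proposition~\ref{prop:rcda_composition}) matches the paper's. However, the way you bound the inflation factor breaks the argument, and no choice of parameters repairs it.

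You set $\beta_{i-1}=\prod_{j<i}3\alpha_j$ by chaining property~(2) of Lemma~\ref{lemma:rcda_in_ell2}. Condition~(a) forces $\alpha_j\ge 30$ for every $j$, so $\beta_{T-1}\ge 90^{T-1}$. Condition~(b) then forces $\epsilon_T\le\epsilon\,90^{-(T-1)}$, so the final size is $s_T\gtrsim\epsilon_T^{-2}\ge\epsilon^{-2}\,8100^{T-1}$.

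But $T$ cannot be bounded. Each application of Lemma~\ref{lemma:rcda_in_ell2} shrinks $s_{i-1}$ only to roughly $\log s_{i-1}$, since $s_i\ge\log s_{i-1}$. So for fixed $\epsilon$ you need $T\gtrsim\log^* n$ steps, and $s_T$ grows with $n$.

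Making the late steps few does not help. The many early steps each still contribute a factor $\ge 90$ to $\beta_{T-1}$, and that factor multiplies the \emph{last} step's error. This is also why your closing choice $\epsilon_T\approx\epsilon/(12\alpha)$ is inconsistent with your own formula for $\eta$. It is also why the product is not absorbed in a logarithm: it enters $s_T$ polynomially through $\epsilon_T^{-2}$.

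The missing idea is to bound $\cost_2(x,Q_{i-1})$ against $\cost_2(x,P)$ directly, with no chaining. Since $Q_{i-1}$ is itself a uniform sample of $P$, $\mathbb{E}[\cost_2(\mu^P,Q_{i-1})]=\cost_2(\mu^P,P)$. Markov's inequality then gives $\cost_2(\mu^P,Q_{i-1})\le\alpha_i\cost_2(\mu^P,P)$ with probability $1-1/\alpha_i$. Combine this with the inductive hypothesis that $Q_{i-1}$ is an RCDA$_2$ of $P$ with parameter below $1/2$. Proposition~\ref{prop:cost_inflation} then yields $\cost_2(x,Q_{i-1})\le 2\alpha_i\cost_2(x,P)$ for all $x$. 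So $\beta_{i-1}=2\alpha_i$ is a single factor, and conditions~(a) and~(b) decouple.

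The paper then takes $\alpha_i=180(\log^{(i)}n)^{1/4}$, $\delta_i=1/\alpha_i$, and $\epsilon_i=\epsilon/(1440(\log^{(i)}n)^{1/2})$. With these choices, the per-step failure probabilities $3\delta_i$ and the error contributions $2\alpha_i\epsilon_i=\epsilon/(4(\log^{(i)}n)^{1/4})$ both form geometric series dominated by their last terms. Meanwhile $|Q_i|=O(\epsilon^{-2}\log\frac1\epsilon\,(\log^{(i)}n)^2)$ still collapses. The iteration stops once $\log^{(t-1)}n\le 2048\epsilon^{-2}\log\frac1\epsilon$. One final step with constant parameters ($\epsilon_f=\epsilon/4000$, $\alpha_f=1000$) then brings the size to $O(\epsilon^{-2}\log\frac1\epsilon)$.
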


\begin{proof}
By Theorem~\ref{thm:isometric_embedding}, we may assume $d=n$. 
Denote the $i$-th iterated logarithm (with base $2$) by $\log^{(i)}$, 
i.e., $\log^{(i+1)} n = \log(\log^{(i)} n)$ where $\log^{(0)} n = n$.
Let $t \geq 1$ be the smallest integer such that $\log^{(t-1)} n \leq 2048\epsilon^{-2}\log\epsilon^{-1}$;
then it also satisfies $\log^{(t-1)} n > \log (2048\epsilon^{-2}\log\epsilon^{-1}) \geq 16$.
For $i=1,\ldots,t$, define
\[
\epsilon_i = \frac{\epsilon}{1440(\log^{(i)}n)^{\frac{1}{2}}}, \qquad
\alpha_i = 180(\log^{(i)}n)^{\frac{1}{4}}, \qquad
\delta_i = 1/\alpha_i.
\]

We construct $Q_t \subseteq Q_{t-1} \subseteq \cdots \subseteq Q_1 \subseteq Q_0=P$, where each $Q_{i+1}$ is obtained from $Q_i$ by uniform sampling of size
\[
O\!\left(\epsilon_{i+1}^{-2}\log\!\left(\frac{\alpha_{i+1}n_i}{\epsilon_{i+1}\delta_{i+1}}\right)\right)
\]
with $n_i = |Q_i|$, as guaranteed by Lemma~\ref{lemma:rcda_in_ell2}. We next prove the following claim.

\begin{claim}
For every $i\in\{1,\ldots,t\}$, with probability at least
$1-3\sum_{j=1}^{i}\delta_j$,
\[
Q_i \text{ is an }
\epsilon\sum_{j=1}^i \frac{1}{4(\log^{(j)}n)^{\frac{1}{4}}}\text{-RCDA}_2
\text{ of } P.
\]    
\end{claim}
\begin{proof}
The proof proceeds by induction on $i\in\{1,\ldots,t\}$.

\emph{Base case.} By Lemma~\ref{lemma:rcda_in_ell2} with parameters $\epsilon_1, \alpha_1, \delta_1$, it follows that $Q_1$ is an $\epsilon_1$-RCDA$_2$ of $P$ with probability at least $1 -2 \delta_1$. 
The base case holds as  $\epsilon_1 = \epsilon /{1440(\log^{(1)}n)^{\frac{1}{2}}} \leq \epsilon /{4(\log^{(1)}n)^{\frac{1}{4}}}$.

\emph{Inductive step.} Since $Q_i$ is a uniform sample of $P$, by Markov's inequality, with probability at least $1-1/\alpha_{i+1}$, $$\cost_2(\median,Q_i) \leq \alpha_{i+1} \cost_2 (\median,P).$$ 

By Proposition~\ref{prop:cost_inflation}, for every $x\in \Rd$, $\cost_2(x,Q_i) \leq 2\alpha_{i+1} \cost_2(x,P)$. Thus, applying Lemma~\ref{lemma:rcda_in_ell2} to sampling $Q_{i+1}$ from $Q_i$ gives that, with probability at least $1-2\delta_{i+1}$, $Q_{i+1}$ is an $\epsilon_{i+1}$-RCDA$_2$ of $Q_i$. As by the induction hypothesis, $Q_i$ is an $\epsilon\sum_{j=1}^i \frac{1}{4(\log^{(j)}n)^{\frac{1}{4}}}\text{-RCDA}_2
\text{ of } P$ with probability at least $1-3\sum_{j=1}^{i}\delta_j$, both events hold with probability at least $1-3\sum_{j=1}^{i+1}\delta_j$ by a union bound. Assuming they both hold, we can now apply Proposition~\ref{prop:rcda_composition} and get, 
$$Q_{i+1} \text{ is an } \left(\epsilon\sum_{j=1}^i\frac{1}{4(\log^{(j)}n)^{\frac{1}{4}}} + \epsilon_{i+1} \cdot 2\alpha_{i+1}\right)\text{-RCDA}_2 \text{ of } P.$$ 
Since $\epsilon_{i+1} \cdot 2\alpha_{i+1} = \frac{\epsilon}{1440(\log^{(i+1)}n)^{\frac{1}{2}}}\cdot 360(\log^{(i+1)})^{\frac{1}{4}} = \frac{\epsilon}{4(\log^{(i+1)}n)^{\frac{1}{4}}}$, we have, $$\epsilon\sum_{j=1}^i\frac{1}{4(\log^{(j)}n)^{\frac{1}{4}}} + \epsilon_{i+1} \cdot 2\alpha_{i+1} \leq \epsilon\sum_{j=1}^{i+1}\frac{1}{4(\log^{(j)}n)^{\frac{1}{4}}},$$
concluding the induction step, and proving the claim.
\end{proof}

We proceed with the proof of Lemma~\ref{lemma:iterative_rcda}. 
By choice of $t$, for $1 \leq j \leq t-2$, we have $(\log^{(j+1)}n)^{\frac{1}{4}} \leq \frac{1}{8}(\log^{(j)}n)^{\frac{1}{4}}$ since $\log^{(j)}n \geq 2^{16}$. 
Thus,
$$\epsilon\sum_{j=1}^{t-1}\frac{1}{4(\log^{(j)}n)^{\frac{1}{4}}} \leq 
\frac{\epsilon}{4(\log^{(t-1)}n)^{\frac{1}{4}}}\sum_{j=0}^{t-2}8^{-j} \leq \frac{2\epsilon}{7(\log^{(t-1)}n)^{\frac{1}{4}}},$$
where the last inequality uses the standard bound over a sum of a geometric series.
Additionally, since $\log^{(t-1)} n \geq 16$, 
$$\epsilon\sum_{j=1}^{t}\frac{1}{4(\log^{(j)}n)^{\frac{1}{4}}} \leq \frac{2\epsilon}{7(\log^{(t-1)}n)^{\frac{1}{4}}} + \frac{\epsilon}{4(\log^{(t)}n)^{\frac{1}{4}}} \leq \frac{\epsilon}{2},$$
and similarly,
$$3\sum_{j=1}^t \delta_j \leq 
3\sum_{j=1}^{t-1} \delta_j + 3\delta_t
\leq \frac{24}{7}\delta_{t-1} + 3\delta_t \leq  \frac{24}{7\cdot180(\log^{(t-1)}n)^{\frac{1}{4}}} + \frac{3}{180(\log^{(t)}n)^{\frac{1}{4}}}
 \leq \frac{1}{105} + \frac{1}{60}
\leq \frac{3}{100},$$
where overall, we get that $Q_t$ is $\epsilon$-RCDA$_2$ of $P$ with probability at least $1-\frac{3}{100}$. 

\textbf{Sample size.}
It is easy to verify the invariant $
n_i \le C\,\epsilon^{-2} \log \epsilon^{-1}(\log^{(i)}n)^2  $
for all $i\in\{1,\ldots,t\}$,
for a sufficiently large constant $C$. Thus, at iteration $t$, since $\log^{(t-1)}n =O(\epsilon^{-2}\log\epsilon^{-1})$, we have that $
n_t = O(\epsilon^{-2}\log\epsilon^{-1}\cdot (\log (\epsilon^{-2}\log\epsilon^{-1}))^2)=O(\epsilon^{-2}\log^3 \frac{1}{\epsilon})$.

To further improve the sample size to $O(\epsilon^{-2}\log \frac{1}{\epsilon})$, we apply one additional iteration with constant parameters. 
Set $\epsilon_f = \frac{\epsilon}{4000}$, $\alpha_f = 1000$, and $\delta_f = 1/\alpha_f$, and uniformly sample $Q \subseteq Q_t$ of size 
$$O\left(\epsilon_f^{-2} \log \frac{\alpha_f n_t}{\epsilon_f \delta_f}\right) = O(\epsilon^{-2} \log \frac{1}{\epsilon}),$$
where note that, since sampling is uniform at every step, $Q$ is also a uniform sample of $P$.

Finally, by applying the same arguments as in the inductive step, we get that $Q$ is an 
$$\left(\frac{\epsilon}{2} + \epsilon_f \cdot 2\alpha_f\right)\text{-RCDA}_2 = \epsilon\text{-RCDA}_2$$
uniform sample of $P$ of size $O(\epsilon^{-2} \log \frac{1}{\epsilon})$ with probability at least $1 - \frac{3}{100} - (2/\alpha_f + \delta_f) \geq 1 - \frac{1}{30}$.
This concludes the proof of Lemma~\ref{lemma:iterative_rcda}. 
\end{proof}

We can now prove our main result, 
by plugging our notion of RCDA into the framework of \citep{carmel2025stablecoresetsunleashingpower}. 

\mainthm*

\begin{proof}
By Lemma~\ref{lemma:iterative_rcda}, a uniform sample $Q$ of size $O(\epsilon^{-2}\log \epsilon^{-1})$ is $\frac{\epsilon}{2}$-RCDA$_2$ of $P$ in $\ell_2^d$ with probability at least $29/30$. By Markov's inequality, with probability at least $5/6$, $\cost_2(\median,Q) \leq 6 \cost_2(\median,P)$. By the union bound, both events occur with probability at least $4/5$.
Now, let $x,y \in \Rd$ such that $\cost_2(x,Q) \leq (1+\epsilon)\cost_2(y,Q)$, 
then
\begin{align*}
\cost_2(x,P) \leq & \tfrac{\epsilon}{2} [\cost_2(x,P)+\cost_2(y,P)] + [\cost_2(x,Q)-\cost_2(y,Q)] +\cost_2(y,P) 
    \tag{$Q$ is $\frac{\epsilon}{2}$-RCDA$_2$ of $P$} \\
& \leq \tfrac{\epsilon}{2}\cost_2(x,P) + (1+\tfrac{\epsilon}{2})\cost_2(y,P) + \epsilon\cost_2(y,Q) 
    \tag{$\cost_2(x,Q) \leq (1+\epsilon)\cost_2(x,Q)$}\\
& \leq \tfrac{\epsilon}{2}\cost_2(x,P) + (1+\tfrac{\epsilon}{2}+12\epsilon)\cost_2(y,P) 
    \tag{by Proposition~\ref{prop:cost_inflation}, $\cost_2(y,Q) \leq 12\cost_2(y,P)$} \\
& \leq (1+\epsilon)(1+12.5\epsilon)\cost_2(y,P) \leq (1+17\epsilon)\cost_2(y,P).
\end{align*}
This establishes that $Q$ is a stable $(\epsilon,17\epsilon)$-coreset of $P$.
\end{proof}

\bibliographystyle{alphaurl}
\bibliography{bibliography}

\appendix
\section{Lower Bound}
\label{app:lowerbound}

\begin{restatable}{theorem}{lowerbound}\label{thm:lowerbound}
For every $\epsilon \in (0,1/4)$ 
and every metric space $(\mathcal{X}, \dist)$ containing at least two distinct points,
there exists a finite point set $P \subseteq \mathcal{X}$, 
such that every algorithm producing a weak $(0, \epsilon)$-coreset 
(and in particular weak $(\epsilon/2, \epsilon)$-coreset) 
for 1-median with probability at least $4/5$, 
has query complexity $\Omega(\epsilon^{-2})$.
\end{restatable}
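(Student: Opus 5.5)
We reduce to distinguishing two biased coins, using a two-point instance. Fix two distinct points $a,b\in\mathcal{X}$ with $\dist(a,b)=1$ (after rescaling). For a parameter $N$ (large, divisible appropriately), let $P_+$ be the multiset (or a set realized on the line segment structure; if points must be distinct we may use weights implicitly via a query model over indices) consisting of $(\tfrac12+2\epsilon)N$ copies of $a$ and $(\tfrac12-2\epsilon)N$ copies of $b$, and $P_-$ the mirror image. In $P_+$, with respect to candidate centers in $\{a,b\}$ (and more generally any center), the cost of $a$ is $(\tfrac12-2\epsilon)$ and the cost of $b$ is $(\tfrac12+2\epsilon)$. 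The first step is to verify that the 1-median of $P_+$ is $a$, and, using only the triangle inequality $\dist(x,a)+\dist(x,b)\ge 1$, that any $x$ has $\cost(x,P_+)\ge \tfrac12-2\epsilon$. Crucially $\cost(b,P_+)/\cost(a,P_+) = (1+4\epsilon)/(1-4\epsilon) > 1+\epsilon$, so $b$ is not a $(1+\epsilon)$-approximate median of $P_+$, and symmetrically $a$ is not one for $P_-$.

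The second step is to show that a weak $(0,\epsilon)$-coreset $Q$ of $P_\pm$ reveals which instance we are in. Given $Q$ (a weighted subset of points, each $a$ or $b$), we compute its optimal center among all of $\mathcal X$; any optimal $x$ for $Q$ must be a $(1+\epsilon)$-approximation for $P$. I will argue that from $Q$ we can decide: take the optimal center $x^*$ of $Q$ and output ``$+$'' if $\dist(x^*,a)<\dist(x^*,b)$. If the instance is $P_+$ and $x^*$ were closer to $b$, then $\cost(x^*,P_+)\ge (\tfrac12+2\epsilon)\dist(x^*,a)+(\tfrac12-2\epsilon)\dist(x^*,b)\ge \tfrac12$ roughly, which is already more than $(1+\epsilon)(\tfrac12-2\epsilon)$ for $\epsilon<1/4$; this computation (using $\dist(x^*,a)\ge\dist(x^*,b)$ and $\dist(x^*,a)+\dist(x^*,b)\ge1$, so $\dist(x^*,a)\ge\tfrac12$) gives the contradiction. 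Hence an algorithm that outputs a weak coreset with probability $4/5$ yields a distinguisher between $P_+$ and $P_-$ with success probability $4/5$, using the same queries.

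The final step is the information-theoretic lower bound: each query (sampling or reading a point at a random/adaptive index, after randomly permuting the instance so that indices carry no information) returns $a$ with probability $\tfrac12\pm2\epsilon$, so $q$ queries give product distributions of Bernoulli$(\tfrac12+2\epsilon)$ versus Bernoulli$(\tfrac12-2\epsilon)$ (up to without-replacement effects, negligible for $N\gg q^2$). Their KL divergence, or squared Hellinger distance, is $O(q\epsilon^2)$, so total variation is bounded away from $3/5$ unless $q=\Omega(\epsilon^{-2})$, contradicting success probability $4/5$. I expect the main obstacle to be formalizing the query model cleanly—handling adaptivity and the random permutation, and the requirement that $P$ be a set of distinct points in $\mathcal X$ which may only have two points; I would treat $P$ as a multiset indexed by positions (as the coreset literature does), and bound the without-replacement deviation via a coupling with i.i.d.\ sampling whenever no index is queried twice.
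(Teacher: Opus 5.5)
Your proof is correct and has the same skeleton as the paper's: a two-point instance, a reduction from producing a weak coreset to distinguishing two biased coins, and the $\Omega(\epsilon^{-2})$ information-theoretic bound. Where you differ is the choice of instances, and that difference matters.

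The paper contrasts a balanced $P_0$ with a biased $P_1$ and asserts that producing a weak coreset requires telling them apart. But both $x$ and $y$ are optimal for $P_0$, so the single point $\{y\}$ is a weak $(0,\epsilon)$-coreset of both $P_0$ and $P_1$. An algorithm that samples three points and outputs any copy of $y$ it sees therefore succeeds on both with probability at least $7/8$. As written, the paper's argument really covers only the algorithm that outputs the raw sample. There the content is that $s\sim\mathrm{Binomial}(m,\frac{1-\epsilon}{2})$ reaches $m/2$ with probability above $1/5$ unless $m=\Omega(\epsilon^{-2})$. That is simpler (one instance, anti-concentration, no query model) but weaker than the theorem's ``every algorithm''.

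Your mirror pair $P_+,P_-$ forces the coreset's optimum to opposite sides: $\dist(x^*,a)\ge\dist(x^*,b)$ gives $\cost(x^*,P_+)\ge\frac12>(1+\epsilon)(\frac12-2\epsilon)$. This is exactly what makes ``weak coreset $\Rightarrow$ distinguisher'' valid for arbitrary algorithms, including adaptive and reweighting ones. You also prove the distinguishing bound directly (product-measure divergence plus a birthday-bound coupling with $N\gg q^2$) rather than citing it.

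Two small refinements. First, use Hellinger rather than KL: $\mathrm{KL}(\mathrm{Ber}(\frac12+2\epsilon)\,\|\,\mathrm{Ber}(\frac12-2\epsilon))$ blows up as $\epsilon\to\frac14$, whereas $H^2=1-\sqrt{1-16\epsilon^2}\le 16\epsilon^2$ uniformly, and it tensorizes to $O(q\epsilon^2)$. Second, like the paper, you rely on two implicit conventions:
\begin{itemize}
\item the coreset is output as explicit points, since otherwise outputting all indices with unit weight is a zero-query coreset;
\item ``there exists $P$'' is read as a hard pair of instances, since any single fixed $P$ admits a zero-query algorithm that hardcodes its answer.
\end{itemize}
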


\begin{proof}
Without loss of generality assume $\mathcal{X}$ contains two distinct points $x,y$ of distance $1$. For simplicity, assume $\epsilon/2$ divides $n$.

Consider two multisets $P_0$ and $P_1$. $P_0$ contains $n/2$ copies of $x$ and $n/2$ copies of $y$, consequently, $\cost(x,P_0) = \cost(y,P_0) = 1/2$. $P_1$ contains $n(1-\epsilon)/2$ copies of $x$ and $n(1+\epsilon)/2$ copies of $y$.
For $P_1$, $\cost(x,P_1) = (1+\epsilon)/2$ and $\cost(y, P_1) = (1-\epsilon)/2$, 
so $\cost(x,P_1) > (1+2\epsilon)\cost(y,P_1)$.

Let $Q$ be a multiset containing $s$ copies of $x$ and $m-s$ copies of $y$. Then $\cost(x,Q)/\cost(y,Q) = (m-s)/s$.
If $s \geq m/2$, then $\cost(x,Q) \leq \cost(y,Q)$.
Therefore, producing a weak coreset requires distinguishing whether $Q$ was sampled from $P_0$ or $P_1$. As a weak $(0,\epsilon)$-coreset of $P_1$ must have $y$ as an optimal solution. Under uniform sampling, $s \sim \text{Binomial}(m, 1/2)$ for $P_0$ and $s \sim \text{Binomial}(m, (1-\epsilon)/2)$ for $P_1$.
By an information-theoretic lower bound, distinguishing between two binomial distributions whose biases differ by $\epsilon$ with constant probability requires $m = \Omega(\epsilon^{-2})$ samples~\citep{chernoff1972sequential,DBLP:journals/jmlr/MannorT04,anthony2009neural}. 
\end{proof}

\end{document}